\def\BibTeX{{\rm B\kern-.05em{\sc i\kern-.025em b}\kern-.08em
    T\kern-.1667em\lower.7ex\hbox{E}\kern-.125emX}}
\def \st{self-triggered }
\newtheorem{remark}{Remark}
\theoremstyle{definition}
\newtheorem{definition}{Definition}
\newtheorem{myprob}{Problem}
\newtheorem{proposition}{\bf Proposition}
\newtheorem{thm}{\bf Theorem}
\newtheorem{myprof}{\bf Proof of Proposition}
\title{\bf Sleep When Everything Looks Fine:\\ Self-Triggered Monitoring for Signal Temporal Logic Tasks}
\author{Chuwei Wang, Xinyi Yu, Jianing Zhao, Lars Lindemann  and Xiang Yin
\thanks{This work was supported by  the national natural Science Foundation of China (62061136004, 62173226, 61833012).}
	\thanks{Chuwei Wang, Jianing Zhao and Xiang Yin are with Department of Automation and Key Laboratory of System Control and Information Processing, Shanghai Jiao Tong University, Shanghai 200240, China.
	e-mail: \tt\small \{wangchuwei, jnzhao, yinxiang\}@sjtu.edu.cn}
	\thanks{Xinyi Yu and Lars Lindemann are with Thomas Lord Department of Computer Science, University of Southern California, Los Angeles, CA 90089, USA.
	e-mail: \tt\small \{xinyi.yu12,llindema\}@usc.edu}
}
\begin{document}
\captionsetup[figure]{labelformat={default},labelsep=period,name={Fig.}}
\maketitle
\thispagestyle{empty}
\pagestyle{empty}
\setlength{\abovecaptionskip}{0pt}
\setlength{\belowcaptionskip}{0pt}
\setlength{\textfloatsep}{6pt}

\begin{abstract}
Online monitoring is a widely used technique in assessing if the performance of the system satisfies some desired requirements during run-time operation. Existing works on online monitoring usually assume that the monitor can acquire system information periodically at each time instant. However, such a periodic mechanism may be unnecessarily energy-consuming as it essentially requires to turn on sensors consistently. In this paper, we proposed a novel \emph{self-triggered}  mechanism for  model-based online monitoring of discrete-time dynamical system under specifications described by signal temporal logic (STL) formulae. 
Specifically, instead of sampling the system state at each time instant, a \st monitor can actively determine when the next system state is sampled in addition to its monitoring decision regarding the satisfaction of the task. We propose an effective algorithm for synthesizing such a \st monitor that can correctly evaluate a given STL formula on-the-fly  while maximizing the time interval between two observations.  We show that, compared with the standard online monitor with periodic information, 
the proposed \st monitor can significantly reduce observation burden while ensuring that no information of the STL formula is lost.  Case studies are provided to illustrate the proposed  monitoring mechanism.
\end{abstract}

\section{Introduction}

Autonomous systems, such as teams of ground robots or unmanned aerial vehicles, have found widespread applications in diverse areas, ranging from search and rescue missions to smart warehouses. The critical nature of these systems lies in their safety implications: any failure to accomplish their designated tasks could result in catastrophic consequences.  However, these systems are inherently susceptible to errors due to their intricate logics and complex dynamics. Therefore, monitoring the safety status of the system stands out as a central task during their operations \cite{yan2022distributed,bonnah2022runtime}. In particular, one should halt the system and initiate corrective actions once a failure is detected.

To express the formal  requirements of autonomous systems, temporal logic, including linear temporal logic (LTL) or signal temporal logic (STL), stands out as one of the most widely employed tools. This is because it offers a rich and structured framework for describing high-level tasks. Particularly, STL provides an effective tool for the quantitative evaluation of real-value signals in real-time settings \cite{donze2013efficient, donze2010robust}. Since the seminal work of Maler \cite{Maler2004}, STL has undergone extensive developments and applications in the analysis and control of numerous safety-critical systems \cite{ma2020sastl,srinivasan2020control,liu2022compositional,kurtz2022mixed,sun2022multi}.

Online monitoring is a widely adopted lightweight technique for  evaluating the correctness of the system  in real-time \cite{ho2014online,dokhanchi2014line,deshmukh2017robust}. In contrast to model checking, which necessitates offline enumeration of all possible behaviors, online monitoring involves a monitor that observes the \emph{partial signals} generated  up to the current instant. It then checks if the task can still  be successfully completed. Online monitoring techniques are broadly categorized into \emph{model-free} and \emph{model-based}, depending on the knowledge utilized by the monitor.
Specifically, in model-free online monitoring \cite{deshmukh2017robust, dokhanchi2014line}, correctness evaluation relies solely on the partial signal, neglecting the  model information of dynamic systems. More recently, there has been a surge of interest in model-based online monitoring \cite{qin2020clairvoyant, ma2021predictive,yoon2021predictive,lindemann2023conformal,momtaz2023predicate,Yu2024-auto}. This approach, leveraging the dynamic information of the system, enables the monitor to make more precise evaluations regarding the satisfaction of the task. 
For instance, in our recent work \cite{Yu2024-auto}, we proposed an integrated framework that combines offline computations of feasible sets based on the system model with online  decision-makings based on the partial signal.

Concerning the implementation of online monitoring, a fundamental question is how online information is obtained by the monitor. 
However, this crucial aspect has been neglected by existing works, as they implicitly presuppose that the monitor has complete access to the partial signal (state sequence) generated by the system. This assumption implies that the monitor can acquire system information periodically by, e.g., consistently activating sensors at each time instant. Nevertheless, in numerous applications, such a periodic information acquisition mechanism may prove to be unnecessary. Consistently activating sensors can be either energy-consuming or potentially prone to information leakage. This scenario is also applicable in various daily settings. For instance, a factory manager typically exercises greater caution when a machine approaches the safety boundary, and conversely, adopts a more relaxed stance when there are no hazardous surroundings within their field of view.

Motivated by the preceding discussion, this paper introduces a novel \emph{self-triggered} mechanism for online monitoring of discrete-time dynamical systems under STL specifications. In this approach, we adopt a model-based setting, assuming that the monitor possesses the dynamic model of the system. However, rather than sampling the system state periodically at each time instant, a self-triggered monitor can actively determine when to sample the next system state, alongside making monitoring decisions about the  feasibility of the STL task. As a result, the information acquisition module remains silent when critical information about the task status is not required. We present an effective algorithm for synthesizing such a self-triggered monitor capable of on-the-fly evaluation of a given (fragment)  STL formula. Specifically, the self-triggered monitor aims to maximize the time interval between two observations while ensuring the successful evaluation of the STL task.

Finally, it is worth mentioning that the concept of self-triggered as well as event-triggered mechanisms has found widespread applications in the literature, aiming to conserve sensor energy \cite{nowzari2012self}, alleviate computation burdens \cite{anta2010sample}, reduce communication bandwidth \cite{Brunner2019}, or enhance information security \cite{senejohnny2017jamming}. 
In \cite{lindemann2018event}, the authors applied event-triggered 
 mechanism to control synthesis of STL tasks. However, to  our knowledge, the utilization of self-triggered information acquisition mechanisms for the purpose of online monitoring of complex logic tasks has not been explored in the literature. 
 Our works is also related to online monitoring under parial observation. where state estimations are also involved; see, e.g., \cite{cairoli2021neural}. Still, no event-triggered mechanism is considered.

The remainder of the paper is organized as follows. 
Section \ref{sec-pre} introduce some basic preliminaries, and  the standard model-based online monitoring  problem is introduced in Section~\ref{sec-mom}.
In Section \ref{sec-pro}, we formally formulate the self-triggered monitoring problem that we solve in this paper. 
Then a self-triggered online monitoring algorithm as well as its correctness are investigated in Section \ref{sec-alg}.  In Section \ref{sec-case}, we present two case studies to illustrate the effectiveness of our algorithm. Finally, we conclude the paper in Section~\ref{sec-con}.

\section{Preliminaries}\label{sec-pre}
\subsection{System Model}
We consider a discrete-time dynamic  system of form
\begin{equation}
x_{t+1}=f(x_t,u_t),  
\label{system model}
\end{equation} 
where 
$x_t\in \mathcal{X}\subseteq \mathbb{R}^{n}$ and $u_t\in\mathcal{U}\subseteq\mathbb{R}^{m}$ are
the system state and the control input at time $t$, respectively,  and $f:\mathcal{X}\times\mathcal{U}\to\mathcal{X}$ is the dynamic function. 
We assume that the initial state is fixed as $x_0\in\mathcal{X}$. 
Given a sequence of control inputs $\mathbf{u}_{0:T-1}=u_{0}u_{1} \cdots u_{T-1}\in \mathcal{U}^{T}$, 
the resulting trajectory of the system is the state sequence  $\xi(x_0,\mathbf{u}_{0:T-1})=\mathbf{x}_{1:T}=x_1\cdots  x_T$ such that $x_{i+1}=f(x_i,u_i),i=0,\ldots,T-1$. 

\subsection{Signal Temporal Logic} 
The specifications of the system are described by STL formulae with bounded-time \cite{Maler2004}, 
whose  syntax  is as follows
\[
  \psi ::= \top \mid \pi^\mu \mid \neg \psi \mid \psi_1 \wedge \psi_2 \mid \psi_1 \mathbf{U}_{[a,b]} \psi_2,
\]
where 
$\top$ is the true predicate, 
$\pi^\mu$ is a predicate  whose truth value is determined by the sign of function $\mu: \mathbb{R}^{n} \to \mathbb{R}$, i.e.,  $\pi^\mu$  is true iff $\mu(x)> 0$. 
Notations $\neg$ and $\wedge$ are the standard Boolean operators ``negation" and ``conjunction", respectively, which can further induce ``disjunction" by $\psi_1\vee\psi_2:=\neg(\neg\psi_1\wedge\neg\psi_2)$ and ``implication" by $\psi_1\to\psi_2:=\neg\psi_1\wedge\psi_2$. 
Notation $\mathbf{U}_{[a,b]}$ is the temporal operator ``until", where $a,b\in\mathbb{R}_{\geq0}$ are two  time instants.  One can also induce  temporal operators ``eventually" and ``always" by $\mathbf{F}_{[a,b]}\psi:=\top\mathbf{U}_{[a,b]}\psi$ and $\mathbf{G}_{[a,b]}:=\neg\mathbf{F}_{[a,b]}\neg\psi$, respectively. 

Given a sequence $\mathbf{x}$, we use notation $(\mathbf{x},t_0)\models \psi$ to denote the satisfaction for STL formulae $\psi$ at time $t_0$. 
The semantics of STL are inductively defined as follows:
\begin{equation}
\begin{tabular}{lcl}
$(\mathbf{x},t)\models \pi^\mu$  &  iff  & $\mu(\mathbf{x}(t))>0$\\
$(\mathbf{x},t)\models \neg\psi$  &  iff  & $\neg((\mathbf{x},t)\models\psi)$\\
$(\mathbf{x},t)\models \psi_1 \wedge  \psi_2$  &  iff  & $(\mathbf{x},t)\models\psi_1\wedge (\mathbf{x},t)\models \psi_2$\\
$(\mathbf{x},t)\models\psi_1\mathbf{U}_{[a,b]}\psi_2$  &  iff  & 
$\exists t'\!\in\! [t+a,t+b]\!:\!(\mathbf{x},t')\!\models\! \psi_2 $\\
&& \!\text{and }$\forall t'' \in [t,t'']:(\mathbf{x},t')\models\psi_1$\notag
\end{tabular}
\end{equation}
We write $\mathbf{x}\models\psi$ whenever $(\mathbf{x},0)\models\psi$.
The readers are referred to \cite{Maler2004} for more details on the semantics of STL.

In the original semantics of $\mathbf{U}_{[a,b]}$, formula
$\psi_1$  needs to be satisfied within interval $[0,a]$. 
For the sake of convenience, 
we use new ``until" operator $\mathbf{U}'$ by slightly modifying the semantics as: $(\mathbf{x},t)\models \psi_1 \mathbf{U}'_{[a,b]} \psi_2$  iff
\[
[\exists t' \!\in\! [t+a, t+b] \!:\! (\mathbf{x},t') \!\models\! \psi_2]
\wedge
[\forall t'' \!\in\! [t+a, t']\!:\! (\mathbf{x},t'') \!\models\! \psi_1]
\]
Note that, we can express the original until operator $\mathbf{U}$ 
as 
$\psi_1 \mathbf{U}_{[a,b]} \psi_2=  \mathbf{G}_{[0, a]} \psi_1
\wedge \psi_1 \mathbf{U}'_{[a,b]} \psi_2$.  
Hereafter in the paper,  we will use this modified version of until operator, and rewrite it directly as  $\psi_1 \mathbf{U}_{[a,b]} \psi_2$ by omitting the superscript.

Finally, for predicate $\pi^\mu$, its satisfaction region  is 
$\mathcal{H}^{\mu}:=\{x\in \mathcal{X}\mid \mu(x) \!\geq\! 0\}$
with 
$\mathcal{H}^{\neg \varphi} = \mathcal{X}\setminus \mathcal{H}^{ \varphi}$ and 
$\mathcal{H}^{\varphi_1 \wedge \varphi_2} = \mathcal{H}^{\varphi_1} \cap \mathcal{H}^{\varphi_2}$.
Therefore, for any Boolean formula  $\varphi$, we can express its requirement equivalently by  $x \in \mathcal{H}^\varphi$.

\subsection{Fragment of STL Formulae}
In this work, we consider a fragment of STL formulae such that the overall task is expressed as the conjunction of $N$ sub-formulae without nested temporal operators. 
Formally, we consider an STL formula of form: 
\begin{equation}
    \Phi= \bigwedge_{i=1,\dots,N} \Phi_{i},   \label{STL form}
\end{equation}
where each $\Phi_i$ is either
\[
\text{(i) }
\mathbf{G}_{[a_i,b_i]} x\!\in\! \mathcal{H}_i
\quad \text{  or } \quad \text{(ii) }
x\!\in\!\mathcal{H}_i^1 \mathbf{U}_{[a_i,b_i]} x\!\in\! \mathcal{H}_i^2.
\]
Let  $\mathcal{I} =\{1,\dots ,N\} $ be the index set of sub-formulae.  
We assume that the indices are ordered based on the beginning instant of each sub-formulae, i.e.,   $a_1\leq  \cdots \leq a_N$. 
For each time instant $t$, 
we denote by 
$\mathcal{I}_t = \{ i  \mid a_i \!\leq\! t \!\leq\! b_i \} \subseteq \mathcal{I}$
the index set of formulae that are effective at time $t$.  
Similarly,  we define the index sets of  sub-formulae effective before and after  $t$  by 
$\mathcal{I}_{<t} \!=\! \{ i  \mid t \!>\! b_i  \}$ and $\mathcal{I}_{>t} \!= \!\{ i  \mid  t\!<\! a_i \}$, respectively.
For each sub-formulae $i \in \mathcal{I}$, we denote by $\mathbf{O}_i \in \{\mathbf{G},  \mathbf{U}\}$  
the unique temporal operator in $\Phi_i$.  
We define $\mathcal{I}_t^\mathbf{U}=\{i\in \mathcal{I}_t\mid \mathbf{O}_i=\mathbf{U} \}$; 
the same for $\mathcal{I}_t^\mathbf{G}$.

\subsection{Remaining Formulae and Feasible Sets}

As the system evolves, some sub-formulae may be satisfied or not effective anymore. 
Throughout the paper, we will use notation $I\subseteq \mathcal{I}$ to denote the index set for those sub-formulae remaining unsatisfied. 
If  $I\cap \mathcal{I}_{<t}=\emptyset$, we define the 
\emph{$I$-remaining formula} at instant $t$ in the form of
\begin{equation}
    \hat{\Phi}_t^I= \underset{i\in I\cap\mathcal{I}_t}{\bigwedge}\Phi_i^{[t,b_i]} \wedge \underset{i\in \mathcal{I}_{>t}}{\bigwedge}\Phi_i, 
\end{equation} 
where $\Phi_i^{[t,b_i]}$ is attained from $\Phi_i^{[a_i,b_i]}$ by replacing the starting instant of the temporal operator from $a_i$ to $t$.
Then the \emph{$I$-remaining feasible set} at instant $t$, denoted by ${X}_t^I$, is 
defined as the set of states from which  the system can possibly satisfy the $I$-remaining formula, i.e., 
\begin{equation}\label{feasible set}
    {X}_t^I = \left\{
	x_t \in \mathcal{X} \,\middle\vert\, 
	\begin{array}{cc}
		\exists \ \mathbf{u}_{t:T-1} \in \mathcal{U}^{T-t} \\
		\text{ s.t. } x_t \xi_f(x_{t}, \mathbf{u}_{t:T-1}) \models \hat{\Phi}_{t}^I
	\end{array}  
	\right\}.
\end{equation}

Otherwise, 
if $I\cap \mathcal{I}_{<t}\neq \emptyset$, 
it means that there are some sub-formulae that should have been satisfied. In this scenario, we define ${X}_t^I=\emptyset$ since the overall task is already failed. 

\section{Model-Based Online Monitoring}
\label{sec-mom}
\subsection{Standard Online Monitor with Periodic Sampling}
In the context of online monitoring, 
the system state is observed by a \emph{monitor} that determines the satisfaction of the STL task dynamically online based on the trajectory generated by the system. 
Specifically, given an STL formula $\Phi$ with  time horizon $T$ and a (partial) signal  $\mathbf{x}_{0:t}=x_0x_1\dots x_t$ (also called \emph{prefix}) up to time instant $t<T$, 
we say    $\mathbf{x}_{0:t}$ is  
\begin{itemize}
    \item 
    \emph{violated} if   $\mathbf{x}_{0:t}\xi(x_t,\mathbf{u}_{t:T-1})\!\not \models\!\Phi$ 
    for any  $\mathbf{u}_{t:T-1}$; 
    \item 
    \emph{feasible} if  $\mathbf{x}_{0:t}\xi(x_t,\mathbf{u}_{t:T-1})\! \models \!\Phi$ for some  $\mathbf{u}_{t:T-1}$. 
\end{itemize}
Then an online monitor is a function
\begin{equation}
    \mathcal{M}: \mathcal{X}^*\to\{0,1\}
\end{equation}
such that, for any prefix $\mathbf{x}_{0:t}$, we have 
\begin{equation}
\mathcal{M}(\mathbf{x}_{0:t})=1 \quad\Leftrightarrow\quad  \mathbf{x}_{0:t}\text{ is violated},
\end{equation}
where $\mathcal{X}^*$ denotes the set of all finite sequence over $\mathcal{X}$.

The above defined monitor is referred to as the \emph{periodic monitor} hereafter 
since it needs to acquire  state information  $x_t$ periodically at each time instant. 
In \cite{Yu2024-auto}, an effective algorithm has been proposed to synthesize  a model-based periodic monitors  involving both offline computations and online execution. 
Specifically, for the offline stage, one first  use the model information to pre-compute  all possible $I$-remaining feasible sets that may incur for each time instant. Then for the online execution, one simply tracks the index set $I$ for the remaining formulae, 
and the monitoring decision can be made by checking whether or not $x_t\in {X}_t^I$.

\subsection{Motivating Example for Aperiodic Sampling}
Before formally formulating the problem, let us consider a motivating example. We consider a nonholonomic unicycle mobile  robot working in a factory. It needs to arrive at the charging station in 10 minutes while avoiding the conveyors on its path. The charging station is marked by the green region in Fig.~\ref{fig:example}.  

Clearly, a periodic monitor can successful monitor the task by continuously checking whether or not the robot hits the obstacle or reaches the charging station on time. 
However, it may require unnecessary observations. 
For example,  consider a possible trajectory shown in Fig.~\ref{fig:example} and two locations points $P1$, $P2$ on it. 
The sector region around each point is the area the robot can reach within 2 minutes according to its dynamic. 
If the robot is at $P1$, then it may hit a conveyor in 2 minutes. Meanwhile, when the robot is at $P2$, no collision is possible in 2 minutes.  Therefore, to reduce the cost of sensor deployment, the monitor can ``sleep'' for at least 2 minutes, instead of constantly collecting irrelevant data.
\begin{figure}[t]
	\centering
{\includegraphics[width=0.9\linewidth]{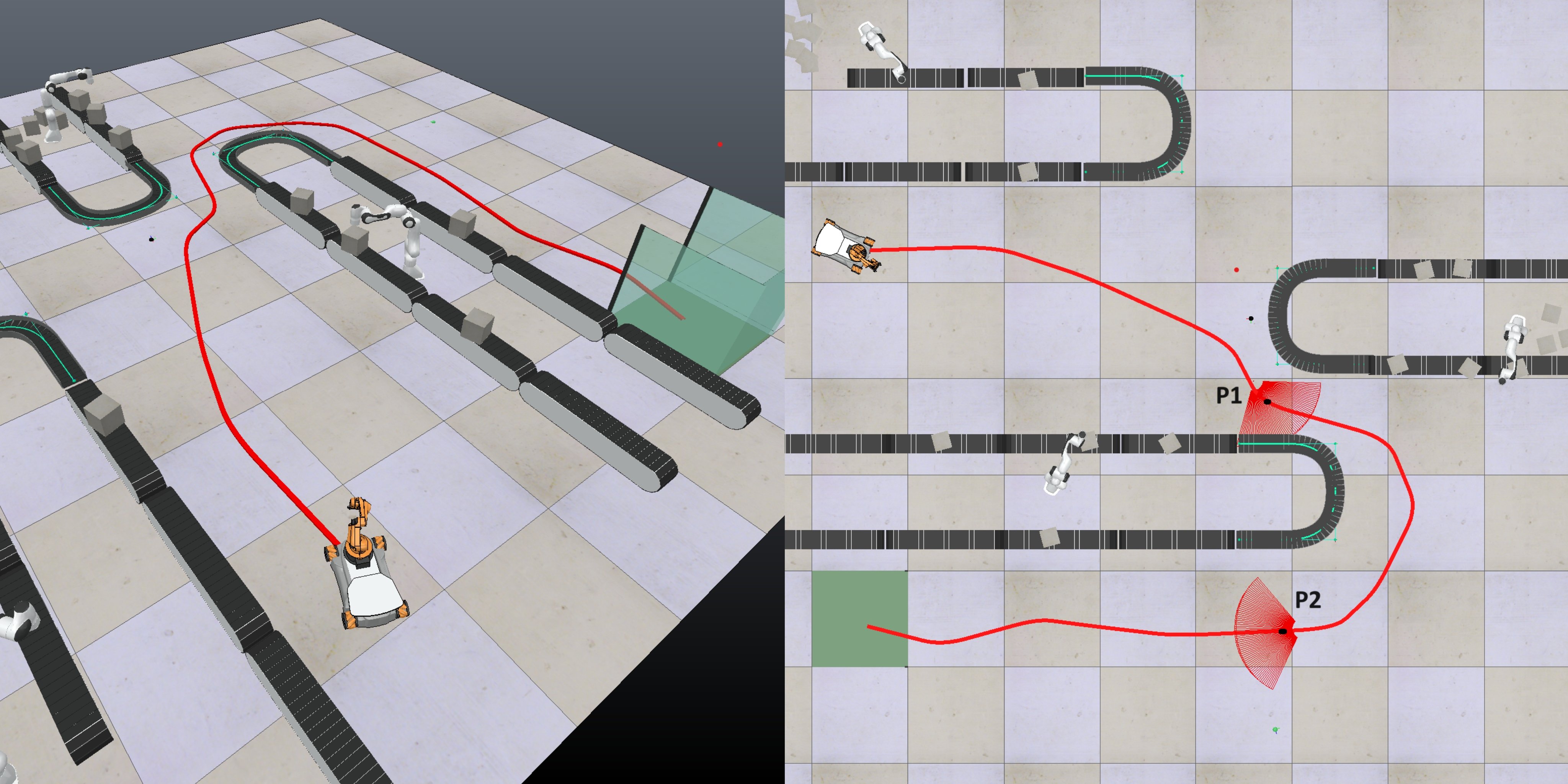}}
\caption{Motivating example of robot charging mission.}
\label{fig:example}
\end{figure}

\section{Self-Triggered Monitoring Mechanism}\label{sec-pro}
In this section, we formally present the proposed \st monitoring mechanism. 
Compared with the periodic monitor that passively receives state information at each time instant, a \st monitor will \emph{actively} determine the next time instant for acquiring state information. 
Specifically, a \st monitor works as follows:
\begin{itemize}
    \item 
    At each decision instant $t$ (which is determined at the previous decision round), 
    it observes the current state $x_t$ by, e.g., turning on sensors,  and makes a monitoring decision $0$ or $1$ regarding the violation of the STL task;  
    \item 
    Simultaneously, together with the binary monitoring decision, it also determines a number $\tau$ specifying after how many time instants the next observation will be made; 
    \item 
    Then the system evolves according to its own dynamic ``silently" to the monitor until time instant $t+\tau$, at which the above two steps are repeated. 
\end{itemize}

To formalize the above process, we define an \emph{observation history} as a sequence of states with time stamps of form 
\begin{equation}
\hbar \coloneqq (x_0,\tau_0)(x'_1,\tau_1)\ldots (x'_{n-1},\tau_{n-1})x'_n\in (\mathcal{X}\times \mathbb{N})^*\mathcal{X}. 
\end{equation}
Specifically,  each $x'_i$ denotes the state at the $i$-th observation and 
$\tau_i$ denotes the time between observations $x'_{i}$ and   $x'_{i+1}$. 
Note that, we use prime in $x'_i$ since it is not the system state at time instant $i$. Then we denote by  $\mathbb{H}:=(\mathcal{X}\times \mathbb{N})^*\mathcal{X}$ the set of all observation histories.  

With the notion of observation histories, now we are ready to formally define the \st monitors. 

\begin{definition}[\bf Self-Triggered Monitors]
Let $T_{max}<T$ be a number specifying the maximum time the monitor allowed to stay silent.  A \st monitor is a function 
\begin{equation}
    \mathcal{M}: \mathbb{H} \to \{0,1\} \times \{1,\ldots ,T_{max}\}
\end{equation} 
that makes decisions  based on the  observation histories.
Specifically, for any  observation history $\hbar \in \mathbb{H}$, 
we have  $\mathcal{M}(\hbar) \coloneqq (\mathcal{M}_D(\hbar), \mathcal{M}_\tau(\hbar))$,  where
\begin{itemize}
    \item 
    $\mathcal{M}_D(\hbar) \in \{0,1\}$ is the  binary monitoring decision 
    with ``0'' denoting \emph{feasible}  and ``1'' denoting \emph{violated}; and 
    \item 
    $\mathcal{M}_\tau(\hbar) \in  \{1,\ldots ,T_{max}\}$ is the trigger information  indicating the time interval to the next observation.
\end{itemize}
Once $\mathcal{M}_D(\hbar)$ returns $1$, the monitoring process terminates.
\end{definition}

Clearly, a \st monitor reduces to the standard periodic monitor when $\mathcal{M}_\tau(\hbar)$ is set to be a single time unit constantly. In this case, the monitor will precisely observe the entire trajectory generated by the system. However, in general, the monitor can only obtain  partial information from the underlying trajectory, which is captured by the following definition.

\begin{definition}[\bf Induced Observation Histories]
Let $\mathcal{M}$ be a \st monitor and  $\mathbf{x}_{0:t} = x_0 x_1 \dots x_t$ be a state sequence generated by   system \eqref{system model}. 
The  \emph{observation history} of $\mathbf{x}_{0:t} $  \emph{induced} by $\mathcal{M}$, denoted by $\hbar_\mathcal{M}(\mathbf{x}_{0:t})$, is defined as the unique sequence
\begin{equation}
    \hbar_\mathcal{M}(\mathbf{x}_{0:t}) =(x_0,\tau_0)(x'_1,\tau_1)\ldots (x'_{n-1},\tau_{n-1})x'_n \in \mathbb{H}
\end{equation}
such that
\begin{enumerate}[(i)]
    \item 
    $\forall i \geq 1:  x'_i = x_{\tau_0 + \dots +\tau_{i-1}}$; 
    \item 
    $\forall i \geq 0: \mathcal{M}_\tau((x_0,\tau_0)(x'_1,\tau_1)\cdots (x'_{i-1},\tau_{i-1}) x'_i) \!=\!\tau_{i}$;
    \item 
    $\sum_{i=0}^{n-1}\tau_i   \leq t < \sum_{i=0}^{n-1}\tau_i  + \mathcal{M}_\tau(\hbar_\mathcal{M}(\mathbf{x}_{0:t}))$.
\end{enumerate} 
\end{definition}

The intuitions of the above definition are as follows. 
The first condition says that states in the induced observation history are sampled from the original state sequence at those decision  time stamps. 
The second  condition says that the  time stamp for the $i$-th observation is determined by  $\mathcal{M}$ based on the previous observation history. 
Finally, the last condition says that there is no more observation after time instant $t$.
 
Now, we formulate the \st monitor synthesis problem that we solve in this paper as follows. 

\begin{myprob}[\bf Self-Triggered Monitor Synthesis Problem]
Given a dynamic system of form \eqref{system model} and an STL formula $\Phi$ as in \eqref{STL form}, design 
a self-triggered online monitor $\mathcal{M}: \mathbb{H} \to \{0,1\} \times \{1,\ldots ,T_{max}\}$ such that for any prefix signal $\mathbf{x}_{0:t}$,  we have $\mathcal{M}_D(\hbar_\mathcal{M}(\mathbf{x}_{0:t})) = 1$ if and only if $\mathbf{x}_{0:t}$ is violated.

\end{myprob}

We conclude this section by making some remarks regarding the above problem formulation.
\begin{remark}
Under the \st mechanism, only  partial signals are available to the monitor. Therefore, it is possible that the system has two distinct trajectories having the same observation. 
Therefore,  Problem~1 essentially requires that the monitor can always distinguish two trajectories such that one is feasible and the other is violated. 
Furthermore, since the monitor does not receive any information until the next trigger time instant,  any signals within this ``silent" interval should also result in the same task status. 
\end{remark}

\begin{remark}
In Problem~1, we do not explicitly put any optimization requirement for the trigger interval $\mathcal{M}_\tau(\hbar)$ for each decision instant. 
In principle, one may want $\mathcal{M}_\tau(\hbar)$ to be as large as possible for the purpose of energy saving. 
Hereafter, our solution algorithm will follow a greedy strategy, i.e., at each decision instant, the monitor tries to maximize the current trigger interval 
as long as the ambiguity requirement is satisfied without further considering its future effect. 
\end{remark}

\section{Self-Triggered Online Monitoring Algorithm}\label{sec-alg}
In this section, we present our solution for synthesizing \st monitor. 
Specifically, we first introduce the notion of belief states and their evolution. 
Then an effective online monitoring algorithm is proposed based on the belief states. 
Finally, we prove the correctness of the algorithm. 
\subsection{Belief States and Predictions}\label{sec-algA}

For online monitoring with periodic information, one can always maintain a precise information $(x_t,I_t)$ regarding the current status of the system, where $x_t$ is the current state and $I_t\subseteq \mathcal{I}$  is the index set for the remaining sub-formulae representing the progress of the task.  
However, for the case of \st monitoring,  one of the major challenges is that one does not have this precise information due to unobservable transitions between two sampling instants. Hereafter, we will refer to tuple $(x_t,I_t)$ as an \emph{augmented state} at time instant $t$. Instead of maintaining a precise augmented state at each instant, the system can only maintain a \emph{belief state} to estimate the current status of the system. 
Note that, for technical purpose, 
in each augmented state $(x_t,I_t)$,  the progress of the task $I_t$ does not take the effect of current state $x_t$, 
which will be carried on for the next instant. 

\begin{definition}[\bf Belief States] 
A \emph{belief state}   is a set of augmented states. 
We denote by  $\mathbb{B} \coloneqq  2^{\mathcal{X}\times \mathcal{I}}$ the set of all possible belief states. 
\end{definition}

Given a belief state, each augmented state $(x_t,I_t)$ in it is an \emph{explanation} of the possible current status of the system.
Therefore, a belief state essentially captures all possible explanations based on the partial information. 
Now suppose that the belief state the monitor holds at instant $t$ is $\bm{b}$, then for the next instant $t+1$, 
it can update its belief state as follows: 
\begin{itemize}
    \item 
    If there is no observation at instant $t+1$, then it can make an ``open-loop prediction" based on the dynamic of the system as well as the semantics of STL formulae; 
    \item 
    If a new state is observed, i.e., $t+1$ is the triggered instant decided in the previous round, 
    then it can further use the state observation to refine the belief. 
\end{itemize}

Next, we elaborate on more details about how belief states are updated. According to the semantics of STL formulae, the index set can be updated once a new state is reached. 

\begin{definition}[\bf Index Updates]
For each time instant $t$, 
we define the \emph{index update function}, 
denoted as
$\textsf{update}_t: \mathcal{I}\times \mathcal{X}\to \mathcal{I}$, 
by:  for any $I\subseteq \mathcal{I},x\in \mathcal{X}$, we have 
\begin{equation}
\label{update}
    \!\!\!\!\!\textsf{update}_t(I,x) 
  \!=\! 
  \left\{
	i \!\in\! I \,\middle\vert\, \!\!\!
	\begin{array}{cc}
	[i\in \mathcal{I}_{t}^\mathbf{U}\wedge x \notin  \mathcal{H}_i^1 \cap \mathcal{H}_i^2] \\
	\vee [i\in \mathcal{I}_{t}^\mathbf{G}\wedge t \neq b_i]\vee[i\notin \mathcal{I}_{t}]
	\end{array}  
 \!\!\!
	\right\} \!, \!\!\!\!\!\!
\end{equation}
where,  for each $i\!\in\! \mathcal{I}_{t}^\mathbf{U}$,  we have $\Phi_i\!=\! x\!\in\!\mathcal{H}_i^1 \mathbf{U}_{[a_i,b_i]} x\!\in\! \mathcal{H}_i^2$. 
\end{definition}

Intuitively, suppose that $I$ is the remaining index set at instant $t$ and $x$ is the state reached at instant $t$. 
Then $\textsf{update}_t(I,x)$ essentially excludes the index set for those sub-formulae with temporal operator $\mathbf{U}$ that have already been satisfied by reaching state $x$ based on the current index set $I$. Meanwhile, those sub-formulae with temporal operator $\mathbf{G}$ expired will not be contained anymore.

Now, by combining the dynamic of the system with the update function of the index set, 
we can define the ``dynamic" over the augmented state space. 

\begin{definition}[\bf Successor Augmented States]
Let $(x,I)$ and $(x',I')$ be two augmented states at  instants $t$ and $t+1$, respectively. 
We say $(x',I')$ is a \emph{successor augmented state} of $(x,I)$ from instants $t$ to $t+1$ if 
\begin{enumerate}[(i)]
  \item 
  there exists $u\in \mathcal{U}$ such that $ x'=f(x,u)$; and 
  \item 
  $I'= \textsf{update}_t(I,x)$.
\end{enumerate} 
We denote by $\textsf{succ}_t(x,I) \in \mathbb{B}$ the set of all \emph{successor augmented states} of   $(x,I)$ from instants $t$ to $t+1$.
\end{definition}

Based on the  ``dynamic'' of   augmented states, we can define how belief states are evolved. 
First, without observations between two trigger instants, 
the belief states can be \emph{predicted} in an open-loop fashion as follow. 

\begin{definition}[\bf Belief Predictions] 
We define the (one-step) \emph{belief prediction function} from instants $t$ to $t+1$, denoted as $\textsf{pred}_{t}^{t+1}\!:\! \mathbb{B}\!\to\! \mathbb{B}$,  by: for any $\bm{b}\!\in\! \mathbb{B}$, we have 
\begin{equation}
    \textsf{pred}_{t}^{t+1}(\bm{b})= \underset{(x,I)\in \bm{b}}{\cup} \textsf{succ}_t(x,I).
\end{equation}
The multi-step belief prediction function is defined recursively by: for any $\bm{b}\!\in\! \mathbb{B}$ and $k\geq 1$, we have
\begin{equation}
    \textsf{pred}_{t}^{t+k}(\bm{b})=  
    \textsf{pred}_{t+k-1}^{t+k}(\cdots (\textsf{pred}_{t+1}^{t+2}(\textsf{pred}_{t}^{t+1}(\bm{b})))).
\end{equation}
\end{definition}

Once   observations  are made at those triggered instants, the belief states can be \emph{refined} as follows. 
\begin{definition}[\bf Belief Refinements] 
We define the \emph{belief refinement function}, denoted as $\textsf{refine}:  \mathbb{B}\times \mathcal{X} \to  \mathbb{B}$,  by: for any $\bm{b}\in \mathbb{B},x\in \mathcal{X}$, we have 
\begin{equation}
    \textsf{refine}(\bm{b},x)= 
    \{ (x',I')\!\in\! \bm{b} \mid x'\!=\!x   \}
    = \bm{b}\cap (\{x \}  \!\times\!\mathcal{I}).
\end{equation} 
\end{definition}

Intuitively, the belief refinement function simply  restricts a belief state to a smaller set that is consistent with the current state observed. 
Note that, compared with the belief prediction function, which is time-dependent,  the belief refinement function is time-independent. 

\subsection{Main Monitoring Algorithm}

First, we introduce two key notions that will be used then present our overall algorithm.
\begin{definition}[\bf Safe and Determined Beliefs]
Let $\bm{b}\in \mathbb{B}$ be a belief state  at instant $t$. 
We say belief state $\bm{b}$ is 
\begin{itemize}
    \item 
    \emph{safe} (w.r.t\ instant $t$) if  
    for any augmented state  $(x,I)\in \bm{b}$, 
    we have  $x\!\in\! X_{t}^{I}$;  and 
    \item 
    \emph{determined}  (w.r.t\ instant $t$) 
    if for any two augmented states $(x,I),(x,I')\in \bm{b}$ with the same state component, 
    we have $\textsf{update}_t(I,x) = \textsf{update}_t(I',x) $.
\end{itemize} 
\end{definition}

The intuitions for the above definitions are as follows. 
For each augmented state $(x,I)$, if $x\not\in X_{t}^{I}$, then it means that starting from state $x$, remaining sub-formulae in $I$ is no longer feasible. 
Therefore, this unsafe circumstance should be avoided for any augmented states in a belief state.   
On the other hand, a belief state is \emph{determined} means that 
there will be no ambiguity regarding the task progress once the actual current state is known. Recall that, in each augmented state $(x,I)$, the index set $I$ has not yet taken the effect of $x$ into account. 
Therefore, the actual index set for sub-formulae remaining is $\textsf{update}_t(I,x)$ rather that $I$ when $x$ is observed.  

Now, let us discuss how the online monitor decides when to trigger the next observation. 
Suppose that at time instant $t>0$,  after making an observation, 
the monitor knows that the current belief state  is $\bm{b}_{t}$, 
and it wants to determine an integer $\tau\in \{1,\dots,T_{max}\}$ as the time interval before the next observation. Such integer $\tau$ should satisfy the following two requirements: 

\SetKwRepeat{Do}{do}{while}
\begin{algorithm}[t]
    \SetAlgorithmName{Procedure}{}{}
	\caption{\texttt{Trigger-Time}}\label{procedure}
	\KwIn{belief state $\bm{b}$, current instant $t$}
	\KwOut{ trigger time $\tau$}
	$k\gets 1$; $\tau^\star\gets 1$; $\boldsymbol{\hat{b}}\gets \bm{b}$\\ 
    \While{$k\leq T_{max}$} 
        { 
        
        $\boldsymbol{\hat{b}}\gets  \textsf{pred}_{t+k-1}^{t+k}( \boldsymbol{\hat{b}} )$\\ 
        \If{$\boldsymbol{\hat{b}}$ is determined w.r.t.\ $t+k$}
        {
        $\tau^\star\gets k$; 
        }
        \eIf{$\boldsymbol{\hat{b}}$  is  safe w.r.t.\ $t+k$}{$k \gets k+1$ }
        {
        \textbf{return} $\tau^\star$
        } 
        }
\end{algorithm}

\begin{algorithm}[t]
\setcounter{algocf}{0}
  \KwIn{Feasible Sets}
  \KwOut{ decision $(\mathcal{M}_{\tau},\mathcal{M}_{D})$ for each instant}
  $t\gets 0$; $I\gets \mathcal{I}$; $\boldsymbol{\hat{b}}\gets\{ (x_0,I) \}$\\
  \While{a new state $x_t$ is observed}
  {
  $ \bm{b}\gets \textsf{refine}(\boldsymbol{\hat{b}},x_t)$\\
   \eIf{$\bm{b}$ is safe}
    {
    $(\mathcal{M}_{\tau},\mathcal{M}_{D}) \gets (\texttt{Trigger-Time}(\bm{b},t),0)$\\ 
    $\boldsymbol{\hat{b}}\gets \textsf{pred}_t^{t+\mathcal{M}_{\tau}}( \bm{b} )$\\
    wait for $\mathcal{M}_{\tau}$ instants and $t \gets t+\mathcal{M}_{\tau}$
    }
    { 
     $ \mathcal{M}_{D}\gets 1$ and \textbf{return} ``violated''
    } 
    }
\label{algortthm1}
\caption{Self-Triggered Monitor}\label{alg}
\end{algorithm}

\begin{itemize}
    \item 
\emph{Prediction Determinacy: }
Since there is no observation until time instant $t+\tau$, the monitor can only predict the belief state for time instant $t+\tau$ as     $\boldsymbol{\hat{b}}_{t+\tau}=\textsf{pred}_t^{t+\tau}(\bm{b}_t)$.  
Once the monitor takes observation at $t+\tau$ and observes a new state $x_{t+\tau}$, the belief state will shrink to $\bm{b}_{t+\tau}= \textsf{refine}(\boldsymbol{\hat{b}}_{t+\tau},x_{t+\tau})$.
Therefore, in order to capture the exact task progress of the system after making the observation, $\boldsymbol{\hat{b}}_{t+\tau}$ has to be determined. 
\item 
\emph{Prediction Safety: }
    %In addition to prediction determinacy, the  predicted   belief state $\boldsymbol{\hat{b}}_{t+\tau}=\textsf{pred}_t^{t+\tau}(\bm{b}_t)$  should also be safe  for time instant $t+\tau$. \XY{need to explain clearly why. and what happens if not.}
Furthermore, once  the  predicted  belief state is not safe, 
the monitor must \emph{immediately} make an observation to resolve ambiguity in the feasibility of the task.  Otherwise, it is possible that the system  has already become infeasible 
but the monitor does not issue an alarm on time. 
\end{itemize} 
Combining the above two requirements together, based on the current information $\bm{b}_t$, the largest value for the trigger interval the monitor can select is 
\begin{equation}
\label{tau}
    \tau^\star = \max\{   \tau \leq \tau_1  :    \textsf{pred}_t^{t+\tau}(\bm{b}_t) \text{ is determined}  \},
\end{equation}
where 
\begin{equation}
    \tau_1 = \min\{   \tau \leq T_{max}  :    \textsf{pred}_t^{t+\tau}(\bm{b}_t) \text{ is not safe}  \}.
\end{equation} 
This value $\tau^\star$ can be selected according to  procedure \texttt{Trigger-Time} 
by iteratively computing the predicted beliefs and checking their safety and determinacy.

Finally, we present our complete \st online monitoring algorithm for STL tasks   in Algorithm 1. We start from the initial instant $t=0$ and $I$ contains all the indexes in $\mathcal{I}$. At each iteration, we observe the system state and refine the belief state. Then we check if the refined belief state is safe or not. 
 If it is safe, then we continue the monitoring process and   employ procedure \texttt{Trigger-Time} to compute the next observation instant as well as the predicted belief states.
 If the refined belief state is not safe, 
 the monitor will issue an alarm and terminate the process.

\subsection{Properties of the Proposed Algorithm}
In this subsection, we prove the correctness of Algorithm~1. 
For any prefix state sequence $\mathbf{x}_{0:t}$, we define its induced augmented state sequence by 
\[
\zeta(\mathbf{x}_{0:t})= (x_0,I_0)(x_1,I_1)\ldots(x_t,I_t), 
\]
where for any $k=1,\dots, t$,  
$I_k$ is the index set for those sub-formulae who have not yet been satisfied by $x_0\cdots x_{k-1}$, i.e.,  $I_k=\{ i\in \mathcal{I} \wedge i\notin \mathcal{I}^{\mathbf{G}}_{<k}\mid x_0\cdots x_{k-1}\not\models \Phi_i   \}$.  
We denote by $\textsf{last}(\zeta(\mathbf{x}_{0:t}) )=(x_t,I_t)$ the last augmented state in the induced sequence. Then for any observation history $\hbar=(x_0,\tau_0)(x'_1,\tau_1)\ldots (x'_{n-1},\tau_{n-1})x'_n$, 
we define 
\[
\mathcal{M}^{-1}_{\bm{b}}(\hbar)=\{\zeta(\mathbf{x}_{0:t}): \hbar_{\mathcal{M}}( \mathbf{x}_{0:t} ) =\hbar \}
\]
as the set of augmented state sequence consistent with the observation under monitor $\mathcal{M}$, where  $t=\sum_{i=0}^{n-1}\tau_i$.
We  define
$\textsf{last}(\mathcal{M}_{\bm{b}}^{-1}(\hbar))=\{  \textsf{last}(\zeta(\mathbf{x}_{0:t}) )\mid    \zeta(\mathbf{x}_{0:t})\in \mathcal{M}^{-1}_{\bm{b}}(\hbar)\}  $. 

Note that, the above $\textsf{last}(\mathcal{M}_{\bm{b}}^{-1}(\hbar))$ is defined based on the entire observation history. However, Algorithm~1 computes belief in a recursive manner. 
Formally, still let $ \hbar  =(x_0,\tau_0)(x'_1,\tau_1)\ldots (x'_{n-1},\tau_{n-1})x'_n$ be an observation history. 
%under \st monitor $\mathcal{M}$. 
We define a belief sequence $B(\hbar )=\bm{\hat{b}}_0 \bm{b}_0\bm{\hat{b}}_1 \bm{b}_1\cdots\bm{\hat{b}}_n \bm{b}_n $ recursively  by  
\begin{itemize}
    \item 
    $\boldsymbol{\hat{b}}_0 =\{ (x_0,\mathcal{I}) \}$; 
    \item 
    $\forall k\geq 0,\bm{b}_k=\textsf{refine}( \bm{\hat{b}}_k,x'_k )$; 
    \item 
    $\forall k\geq 0,\bm{\hat{b}}_{k+1} = \textsf{pred}_{T_{k-1}}^{T_{k}}( \bm{b}_{k} )$.
\end{itemize}
where $T_k= \sum_{i=0}^{k}\tau_i$ with $T_{-1}=0$.   
According to Algorithm~1, 
$ \bm{b}_n$ is the belief state maintained by the monitor following observation history $\hbar$. 

The following result states that the belief state maintained by the monitor is indeed 
the set of all possible augmented states consistent with the observation.

\begin{proposition}\label{prop}
Let $\hbar=(x_0,\tau_0)(x'_1,\tau_1)\ldots (x'_{n-1},\tau_{n-1})x'_n$ be an observation history under a given monitor $\mathcal{M}$ and 
$B(\hbar )=\bm{\hat{b}}_0 \bm{b}_0\bm{\hat{b}}_1 \bm{b}_1\cdots\bm{\hat{b}}_n \bm{b}_n $ be its induced belief state sequence. 
Then we have  $\bm{b}_n=\textsf{last}(\mathcal{M}_{\bm{b}}^{-1}(\hbar))$. 
\end{proposition}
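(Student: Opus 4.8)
The plan is to proceed by induction on $n$, the number of timed observations recorded in $\hbar$. For the base case $n=0$ we have $\hbar=x_0$; since the initial state is fixed, the only prefix of length one is $\mathbf{x}_{0:0}=x_0$, so $\mathcal{M}_{\bm{b}}^{-1}(\hbar)=\{\zeta(x_0)\}$ and $\textsf{last}(\zeta(x_0))=(x_0,\mathcal{I})$, which is exactly $\bm{b}_0=\textsf{refine}(\{(x_0,\mathcal{I})\},x_0)$.

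For the inductive step, let $\hbar'=(x_0,\tau_0)\cdots(x'_{n-2},\tau_{n-2})x'_{n-1}$ be $\hbar$ with its last observation deleted, and write $t'=T_{n-2}$, $t=T_{n-1}=t'+\tau_{n-1}$. From the recursive definition of the belief sequence, $\bm{b}_n=\textsf{refine}\big(\textsf{pred}_{t'}^{t}(\bm{b}_{n-1}),x'_n\big)$, while the induction hypothesis gives $\bm{b}_{n-1}=\textsf{last}(\mathcal{M}_{\bm{b}}^{-1}(\hbar'))$. I would then establish the two inclusions by unrolling $\textsf{pred}_{t'}^{t}$ as a composition of $\textsf{succ}$-steps. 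For ``$\subseteq$'': any element of $\bm{b}_n$ has the form $(x'_n,I)$ with a chain $(y_0,J_0)\to\cdots\to(x'_n,I)$ of successor augmented states and $(y_0,J_0)\in\bm{b}_{n-1}$; the induction hypothesis furnishes a prefix $\mathbf{x}_{0:t'}$ with $\hbar_{\mathcal{M}}(\mathbf{x}_{0:t'})=\hbar'$ and $\textsf{last}(\zeta(\mathbf{x}_{0:t'}))=(y_0,J_0)$, and the control inputs witnessing the $\textsf{succ}$-chain extend it to a prefix $\mathbf{x}_{0:t}$ whose induced $\zeta$-sequence ends at $(x'_n,I)$. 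For ``$\supseteq$'': run this backwards, starting from a prefix realizing a given element of $\textsf{last}(\mathcal{M}_{\bm{b}}^{-1}(\hbar))$, truncating it at $t'$, applying the induction hypothesis, and recognizing its last $\tau_{n-1}$ transitions as a $\textsf{succ}$-chain living inside $\textsf{pred}_{t'}^{t}(\bm{b}_{n-1})$, followed by a refinement at $x'_n$.

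Two preliminary facts make this mechanical, and I would state them first. (a) Along any state sequence, $\zeta$'s index sets evolve exactly by the update function, $I_{k+1}=\textsf{update}_k(I_k,x_k)$ --- a direct unfolding of the semantics of the STL fragment (an until $\Phi_i$ is discharged precisely when a state lands in $\mathcal{H}_i^1\cap\mathcal{H}_i^2$ while $a_i\le k$, and an always $\Phi_i$ precisely at $k=b_i$); this is what makes a $\textsf{succ}$-chain the same thing as ``run the dynamics and propagate the remaining index set''. (b) Induced observation histories are compatible with truncation: since $\hbar$ being an observation history under $\mathcal{M}$ means in particular $\tau_{n-1}=\mathcal{M}_\tau(\hbar')$, no observation is triggered strictly between $t'$ and $t$, and therefore $\hbar_{\mathcal{M}}(\mathbf{x}_{0:t})=\hbar$ holds if and only if $\hbar_{\mathcal{M}}(\mathbf{x}_{0:t'})=\hbar'$ and $x_t=x'_n$; this is the identity that lets the induction close.

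The main obstacle I anticipate is precisely the bookkeeping in (b): one must track carefully that the positions of the observations are determined by the monitor's own trigger decisions, so that ``being consistent with $\hbar$'' really does factor through ``being consistent with $\hbar'$'' together with matching the final observed state. A related point to handle cleanly is the degenerate, off-model case where $\mathcal{M}_{\bm{b}}^{-1}(\hbar')=\emptyset$ (e.g.\ $x'_{n-1}$ unreachable under the recorded timing): then $\bm{b}_{n-1}=\emptyset$ by the induction hypothesis, which forces $\bm{b}_n=\emptyset$ and likewise $\mathcal{M}_{\bm{b}}^{-1}(\hbar)=\emptyset$, so the equality still holds --- but the arguments for the two inclusions should be phrased so as not to assume nonemptiness anywhere. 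Everything else is routine set-chasing.
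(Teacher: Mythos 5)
Your proposal is correct and follows essentially the same route as the paper's proof: induction on the number of observations $n$, with your preliminary fact (a) being exactly the paper's key lemma that the semantically-defined index sets of $\zeta$ coincide with those propagated by $\textsf{update}_t$ (the paper's $\zeta^{upd}=\zeta$ claim), and your fact (b) being the truncation compatibility that the paper uses implicitly. Your explicit handling of the two inclusions and the degenerate empty case is slightly more careful bookkeeping of the same argument.
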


With the help of Proposition~\ref{prop}, now we can formally establish the correctness of Algorithm~1.

\begin{thm}
Algorithm \ref{alg} correctly solves Problem~1. 
That is, for any  $\mathbf{x}_{0:t}$ generated by system \eqref{system model},  
we have 
\begin{equation}
    \mathcal{M}_D(\hbar_\mathcal{M}(\mathbf{x}_{0:t})) = 1 \Leftrightarrow \mathbf{x}_{0:t} \text{ is violated}.
\end{equation} 
\end{thm}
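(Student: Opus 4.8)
The plan is to prove the two directions of the biconditional separately, using Proposition~\ref{prop} as the bridge between the monitor's belief state and the ground-truth augmented state. Recall that Algorithm~1 returns ``violated'' (i.e.\ sets $\mathcal{M}_D = 1$) exactly when, upon refining the predicted belief with the newly observed state $x_t$, the resulting belief $\bm{b} = \textsf{refine}(\boldsymbol{\hat b}, x_t)$ fails to be safe w.r.t.\ $t$. So the whole proof reduces to showing: $\bm{b}$ is \emph{not} safe $\iff$ $\mathbf{x}_{0:t}$ is violated, at each observation instant $t$ that the algorithm actually reaches. By Proposition~\ref{prop}, $\bm{b}_n = \textsf{last}(\mathcal{M}_{\bm b}^{-1}(\hbar))$, meaning the refined belief is precisely the set of last augmented states $(x_t, I_t)$ over all prefixes $\mathbf{x}_{0:t}$ consistent with the observation history $\hbar = \hbar_\mathcal{M}(\mathbf{x}_{0:t})$.

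First I would establish the ``$\Leftarrow$'' direction, which should be the more delicate one and is really the \emph{completeness} of the monitor. Suppose $\mathbf{x}_{0:t}$ is violated. I need to show the algorithm has already terminated with $\mathcal{M}_D = 1$ by time $t$ — in particular, that it did not ``sleep through'' the violation. The key point is that the \texttt{Trigger-Time} procedure only advances $k$ while the predicted belief $\textsf{pred}_{t}^{t+k}(\bm{b})$ remains \emph{safe}; hence $\tau^\star = \mathcal{M}_\tau$ never exceeds $\tau_1 - 1$, so the monitor is guaranteed to wake up at or before the first instant $s$ at which the predicted belief becomes unsafe. I would argue inductively over observation rounds that the true augmented state $\textsf{last}(\zeta(\mathbf{x}_{0:s}))$ is always contained in the monitor's predicted belief $\boldsymbol{\hat b}$ at instant $s$ (this is essentially soundness of the \textsf{pred} operator: it over-approximates the reachable augmented states, following from Definitions of \textsf{succ}, \textsf{update}, and \textsf{pred}). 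Then, since $\mathbf{x}_{0:t}$ violated means $x_s \notin X_s^{I_s}$ for the relevant $s \le t$ (using the characterization of feasibility/violation via feasible sets and the fact that remaining feasible sets are monotone so infeasibility propagates forward — here I'd invoke the $I$-remaining feasible set definition and the convention $X_t^I = \emptyset$ once some sub-formula in $I \cap \mathcal{I}_{<t}$ is missed), the refined belief $\bm b$ at that wake-up instant contains an augmented state $(x_s, I_s)$ with $x_s \notin X_s^{I_s}$, so $\bm b$ is not safe and the algorithm returns ``violated''. The main obstacle here is handling the bookkeeping between the ``no-effect-of-current-state'' convention in augmented states and the instant at which feasibility actually fails — I must be careful that the feasible set $X_t^I$ is indexed so that unsafety is detected at exactly the right round rather than one step late.

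Next the ``$\Rightarrow$'' direction, the \emph{soundness} of the alarm: if $\mathcal{M}_D(\hbar_\mathcal{M}(\mathbf{x}_{0:t})) = 1$ then $\mathbf{x}_{0:t}$ is violated. By the termination condition, $\mathcal{M}_D = 1$ means that at the last observation instant $t' \le t$ the refined belief $\bm b = \textsf{refine}(\boldsymbol{\hat b}, x_{t'})$ is not safe, so there is $(x_{t'}, I) \in \bm b$ with $x_{t'} \notin X_{t'}^{I}$. By Proposition~\ref{prop}, $\bm b = \textsf{last}(\mathcal{M}_{\bm b}^{-1}(\hbar'))$ where $\hbar'$ is the observation history up to $t'$; but I need more — I need that \emph{every} augmented state in $\bm b$ shares the \emph{same} update image, which is exactly what \texttt{Trigger-Time} guarantees via the \emph{prediction determinacy} check ($\tau^\star$ is only set when $\textsf{pred}_t^{t+k}(\bm b)$ is determined). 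Determinacy ensures that the single observed prefix $\mathbf{x}_{0:t'}$ has a well-defined index set $I_{t'} = \textsf{update}_{t'}(I, x_{t'})$ independent of which explanation $(x_{t'}, I) \in \bm b$ one picks, and moreover (by the monotonicity of feasible sets in $I$, or more directly since determinacy forces all such $I$ to have the same restriction to effective indices) $x_{t'} \notin X_{t'}^{I}$ for the true $I$ as well. Then $x_{t'} \notin X_{t'}^{I_{t'}}$ unfolds, via the definition of the $I$-remaining feasible set, to: no control sequence $\mathbf{u}_{t':T-1}$ makes $x_{t'}\xi_f(x_{t'},\mathbf{u}_{t':T-1}) \models \hat\Phi_{t'}^{I_{t'}}$, which is precisely the statement that $\mathbf{x}_{0:t'}$ is violated; and since violation is prefix-monotone, $\mathbf{x}_{0:t}$ is violated too. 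Finally I would note the converse-within-the-loop — that whenever the algorithm does \emph{not} raise an alarm at an instant, the prefix is feasible — is subsumed by the contrapositive of the ``$\Leftarrow$'' direction, so the two cases together give the full equivalence. The delicate step to get right is the interaction between determinacy and safety: determinacy is what licenses treating the belief-state as carrying a genuine yes/no verdict rather than an ambiguous one, and the proof must make explicit that the greedy choice of $\tau^\star$ in \eqref{tau} never sacrifices this.
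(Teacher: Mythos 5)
Your proposal is correct and follows essentially the same route as the paper's proof: Proposition~\ref{prop} identifies the refined belief with the set of augmented states consistent with the observation history, safety of that refined belief decides the alarm, and determinacy (enforced by \texttt{Trigger-Time}) is what licenses turning the existential unsafety of the belief into genuine violation of the observed prefix. If anything, your treatment of the ``$\Leftarrow$'' direction is more careful than the paper's, which tacitly assumes the monitor is awake at the violation instant; your explicit argument that $\tau^\star \le \tau_1$ and that $\textsf{pred}$ over-approximates the true augmented state sequence, so the monitor cannot sleep past the first instant at which the predicted belief becomes unsafe, supplies exactly the bookkeeping the published proof leaves implicit.
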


\begin{proof} 
Let $\mathbf{x}_{0:t}$ be the actual prefix signal, 
$\zeta(\mathbf{x}_{0:t})= (x_0,I_0)(x_1,I_1)\ldots(x_t,I_t)$ be its induced augmented state sequence
and $\hbar=\hbar_\mathcal{M}(\mathbf{x}_{0:t})$ be the observation history of the monitor. 
Note that, 
the actual augmented state sequence  $\zeta(\mathbf{x}_{0:t})$ is always contained in 
$\mathcal{M}_{\bm{b}}^{-1}(\hbar)$, 
which also means that $(x_t,I_t) \in \textsf{last}(\mathcal{M}_{\bm{b}}^{-1}(\hbar))$. 
Furthermore,  according to Proposition~\ref{prop}, the belief state of the monitor after refinement is $\bm{b}=\textsf{last}(\mathcal{M}_{\bm{b}}^{-1}(\hbar))$. 

Now, let us argue the correctness of Algorithm~1 as follows.  
To show the ``$\Leftarrow$" direction, suppose that $\mathbf{x}_{0:t}$ is a violated sequence. 
Then  we know that some sub-formulae not yet satisfied must be no longer feasible from $x_t$, 
which means that $x_t\notin X_t^{I_t}$. 
This immediately implies that $\bm{b}$ is not safe, i.e., the monitor will issue decision $ \mathcal{M}_D=1$. 
For the ``$\Rightarrow$" direction, suppose that $\mathcal{M}_D(\hbar_\mathcal{M}(\mathbf{x}_{0:t}))=1$.  
This implies that $\bm{b}$ is not safe, i.e., 
there exists $(x,I)\in \bm{b}$ such that $x\notin X_t^I$. If $\bm{b}$ is a singleton, then $\mathbf{x}_{0:t}$ is a violated sequence. Otherwise, for all $(x_t,I)\in \bm{b}$, we have $x_t\notin X_t^I$. Based on procedure \texttt{Trigger-Time}, $\bm{\hat{b}}_n $ is determined, $x_t$ must be in the region consistent for index update thus the violation is only concerned with the remaining formulae after update. Thus we prove that $\mathbf{x}_{0:t}$ is a violated sequence.
%However, based on procedure \texttt{Trigger-Time}, $\bm{\hat{b}}_n $ is determined, which means that $\bm{b}$ is a singleton. Therefore, the only possibility is $x_t\notin X_t^{I_t}$, which means that $\mathbf{x}_{0:t}$ is a violated sequence.
\end{proof}

\begin{figure}[htbp]\centering
	\subfloat[Two possible altitude trajectories of the drone.]
	{ 
    \centering
    		\includegraphics[scale=0.23]{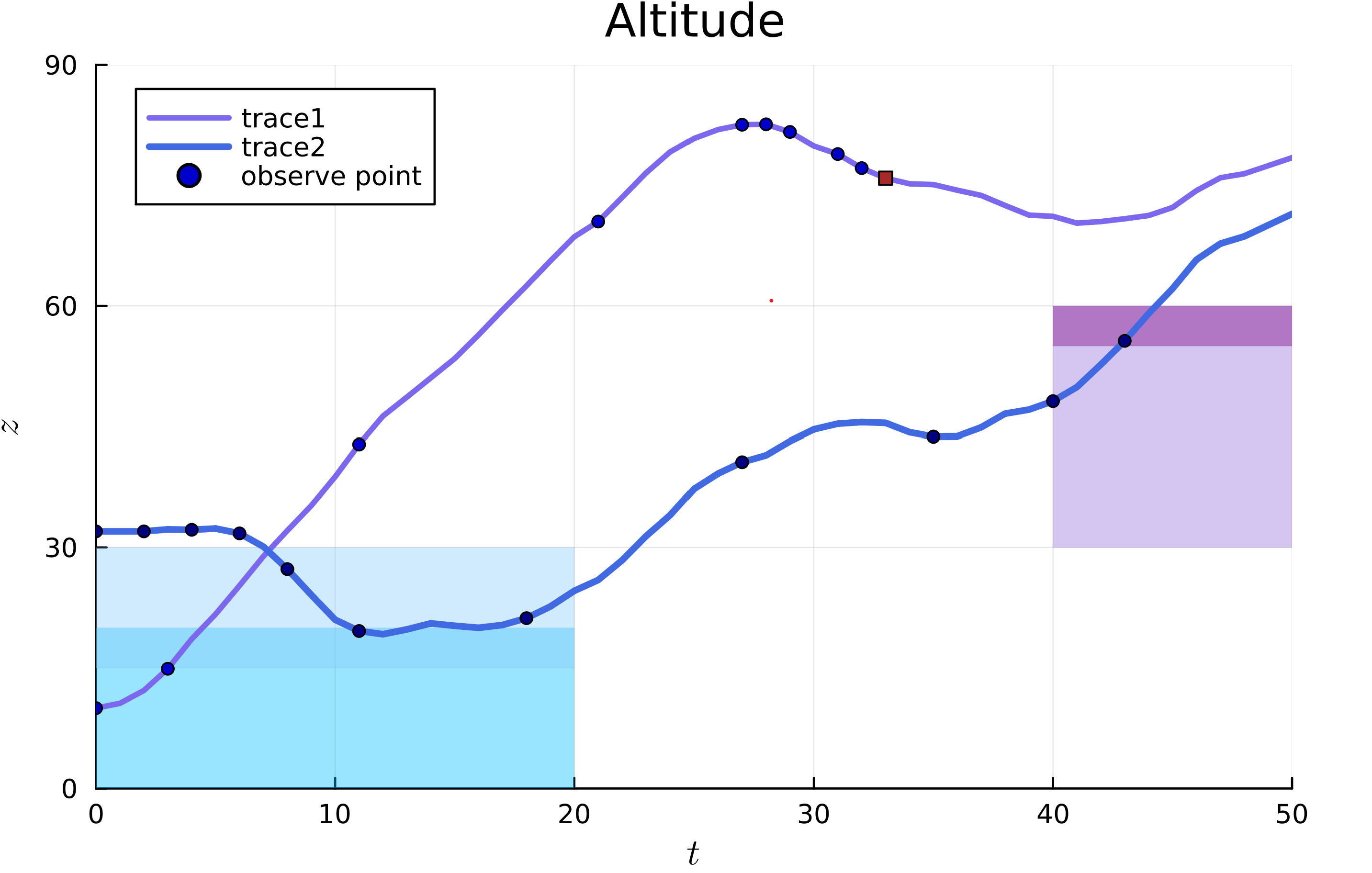}
	}
	\hspace{3mm}
	\subfloat[Left (resp.\ right) hand side is the safe (resp.\ unsafe) belief\\ state predicted for $t=34$ (resp.\ $t=35$) from $t=27$.]
	{\centering
    		\includegraphics[scale=0.23]{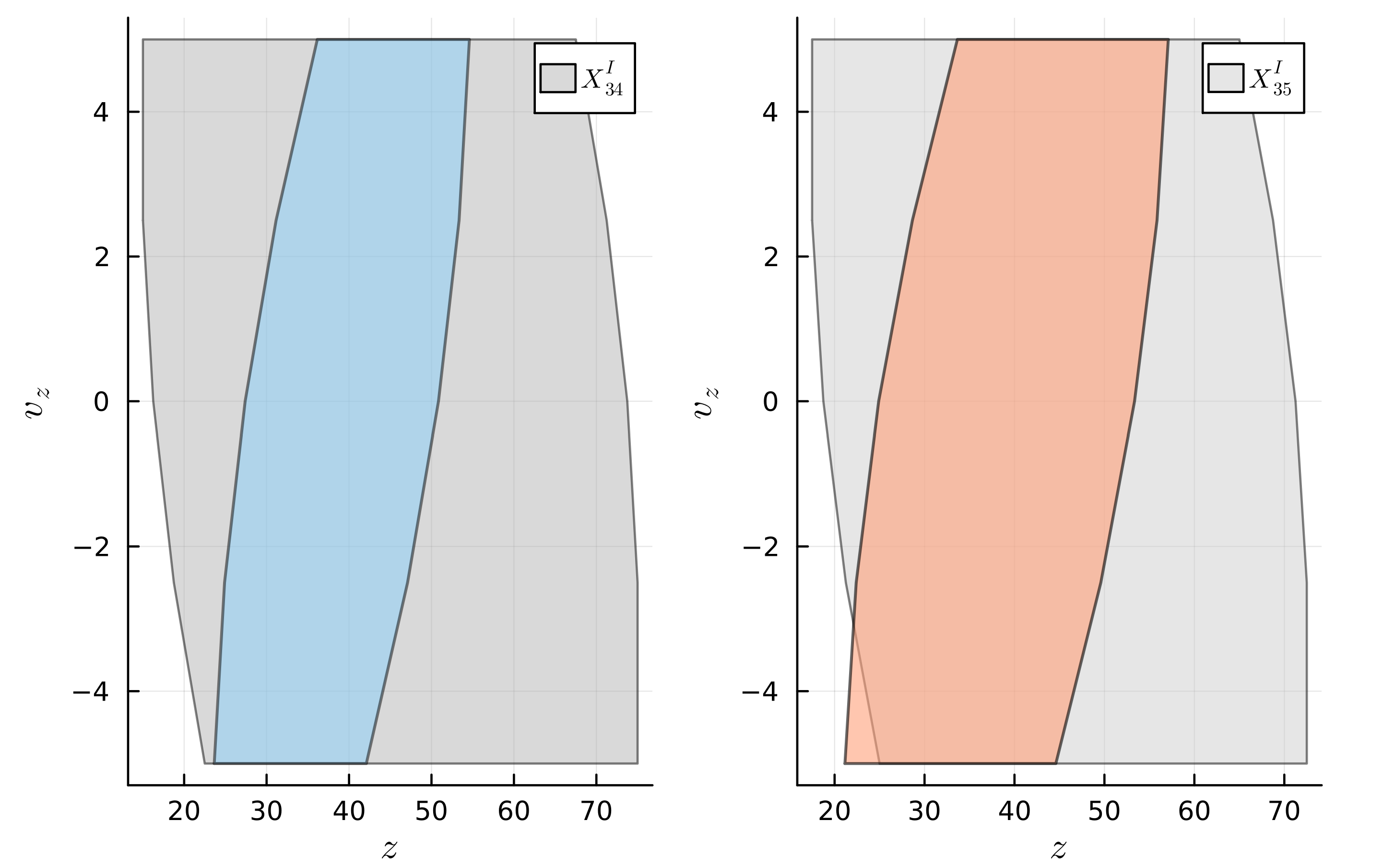}   
	}
	\caption{Online monitoring of drone altitude control systems.} % 
	\label{Case1fig1}
\end{figure}

\section{Case Studies}\label{sec-case}
In this section, we demonstrate our approach by case studies.  
Specifically, we have implemented the proposed self-triggered  monitoring algorithm in \textsf{Julia} language \cite{Bezanson2017} and  employ  package \textsf{JuliaReach} \cite{Bogomolov2019} for reachability analysis. 
In particular, for nonlinear systems,  reachable sets   are calculated by over-approximated zonotopes \cite{Althoff2010a}.
Note that, for discrete-time systems in \eqref{system model}, a belief state may contain infinite augmented states since the state space is continuous. Therefore, for the purpose of computation, we represent each belief state $\bm{b}$ equivalently in the form of  $\bm{b}'=\{ (R_{I_1},I_1),\dots, (R_{I_l},I_l) \}$,  
where  $I_i\neq I_j,\forall i\neq j$ and $R_{I}=\{x\in\mathcal{X}\mid (x,I)\in \bm{b} \}$ is the region of all states whose index set is $I$.     
All simulations were performed on a computer with an Intel Core i9-13900H CPU and 32 GB of RAM. 
All codes are available at \url{https://github.com/ChuweiW/st_om}.

\subsection{Case Study 1:  Drone Altitude Control} 
We consider a drone  for the purpose of gathering air data at various altitudes. Therefore, we consider its altitude state $z$ and velocity $v_z$, i.e., $x = [z,v_z]^{\top}\in \mathcal{X}=[0,100]\times [-5,5]$. The discrete-time dynamic   of the drone is 
\begin{equation}
    x_{t+1}=Ax_t+Bu_t
\end{equation}
where $A=
\begin{bmatrix}
1 & 0.5\\
0 & 1
\end{bmatrix}$, $B=
\begin{bmatrix}
0.5\\
1
\end{bmatrix}$
and $\mathcal{U}\subseteq [-2.5,2.5]$. 
In order to initialize the process, the drone needs to first reach  altitude ranges $[0,20]$ and $[15,30]$ in the first $20$ time instants to get the permission to take off. 
Then between $40\sim50$ time instants, it should maintain its altitude between $[30,60]$ until it successfully collects required data at $z\in[55,60]$. We set the time horizon $T=50$ and STL task is 
\begin{equation}
    \begin{aligned}
    \psi &= \mathbf{F}_{[0,20]} z\in[0,20] \wedge  \mathbf{F}_{[0,20]} z\in [15,30]\\ & \wedge  z\in[30,60] \mathbf{U}_{[40,50]} z\in[55,60].
    \end{aligned}
    \label{case1stl}
\end{equation} 

In Fig.~\ref{Case1fig1}(a), we show two possible trajectories of the drone with the starting from different initial altitudes.  
In each trace, blue circles denote those triggered instants at which the monitor takes observations. 
Specifically,  trace~1 fails to achieve the STL task 
and the red square in the trace denotes the instant from which the task is not longer feasible anymore.  
One can see that the monitor takes observations more frequently when the system is approaching the violation of the task. 
On the other hand, trace~2 shows a trajectory along which the STL is achieved successfully.  
Note that, the monitor takes observations at $t=27$ and $t=35$. 
To illustrate procedure \texttt{Trigger-Time}, 
let us focus on the evolution of the predicted belief state from $t=34$ to $t=35$, which is shown in Fig.~\ref{Case1fig1}(b). 
Particularly, we see that the predicted belief state for $t=34$ is still safe, while the predicted belief state for $t=35$ is no longer safe as it may contain state that is not feasible. This is why the monitor has to make an observation at $t=35$.

\subsection{Case Study 2: Spacecraft Rendezvous}
We consider a spacecraft rendezvous problem adopted from \cite{Chan2017}, where a spacecraft \emph{chaser} aims to approach another spacecraft \emph{target} in the same orbital plane satisfying some desired requirements.  
The system model of the \emph{chaser} navigating to the \emph{target} can be described by a two-dimensional nonlinear dynamics as follows:
 \begin{equation}
\begin{aligned}
\ddot{x} &= n^2x + 2n\dot{y} + \frac{\mu}{r^2} - \frac{\mu}{r_c^3}(r+x) + \frac{u_x}{m_c} \\
\ddot{y} &= n^2y - 2n\dot{x} -\frac{\mu}{r_c^3}y + \frac{u_y}{m_c}, \nonumber
\end{aligned}
\end{equation}
 where system state is $x_k = [x\ y\ \dot{x}\ \dot{y}]^{\top}$, control inputs provided by the  thrusters are $u_k = [u_x\ u_y]^{\top}$. The parameters are related to geostationary equatorial orbit with $\mu=3.698\times 10^{14}\times60^2 m^3/min^2$ , $r=42164km$, $m_c = 500kg$, $n=\sqrt{\frac{\mu}{r^3}}$ and $r_c=\sqrt{(r+x)^2 + y^2}$.  
 We discretize the system with sampling time of 0.5  minutes. The physical constraints are $x\in \mathcal{X}=[-100,0]\times[-70,70]\times[0,10]\times[0,10]$ and $u\in \mathcal{U}=[-3,3]\times[-3,3]$.

The formal specifications for the rendezvous system are as follows. 
First, when the distance between \emph{chaser} and \emph{target} is less than 100m, the \emph{chaser} should  keep rendezvousing with relatively stationary velocity and stay within the  line-of-sight  cone  $\textsf{LOS}=\{(x,y)\mid (y\geq x\text{tan}(30^{\circ})\wedge  (-y\geq x\text{tan}(30^{\circ})\}$. 
Furthermore, the rendezvous  task should be completed within 25 minutes in the sense that the \emph{chaser} is close enough to the \emph{target} with velocity constraints. 
We define  $\textsf{Goal}=\{(x,y,\dot{x},\dot{y})\mid x\in[-6,0] \wedge  y\in[-2,2]\wedge  \dot{x}\in[0,3] \wedge  \dot{y}\in[0,3]\}$ 
as the region satisfying the task. 
Finally, during the rendezvous process, the \emph{chaser} should avoid the \textsf{Debris} region represented by the red ball in Fig.~\ref{fig:case2}. 
Therefore, in terms of STL formula, the mission can be  described by:
\[
 \psi = 
 \mathbf{F}_{[0,25]} x_k \!\in\! \textsf{Goal} \wedge  \mathbf{G}_{[0,25]} p \!\notin\! \textsf{Debris} \wedge  \mathbf{G}_{[0,25]} p \!\in\! \textsf{LOS}, 
\]
where $p$ denotes $[x,y]^{\top}$ the position of the \emph{chaser}. 

\begin{figure}[t] 
	\centering
	{\includegraphics[width=8cm]{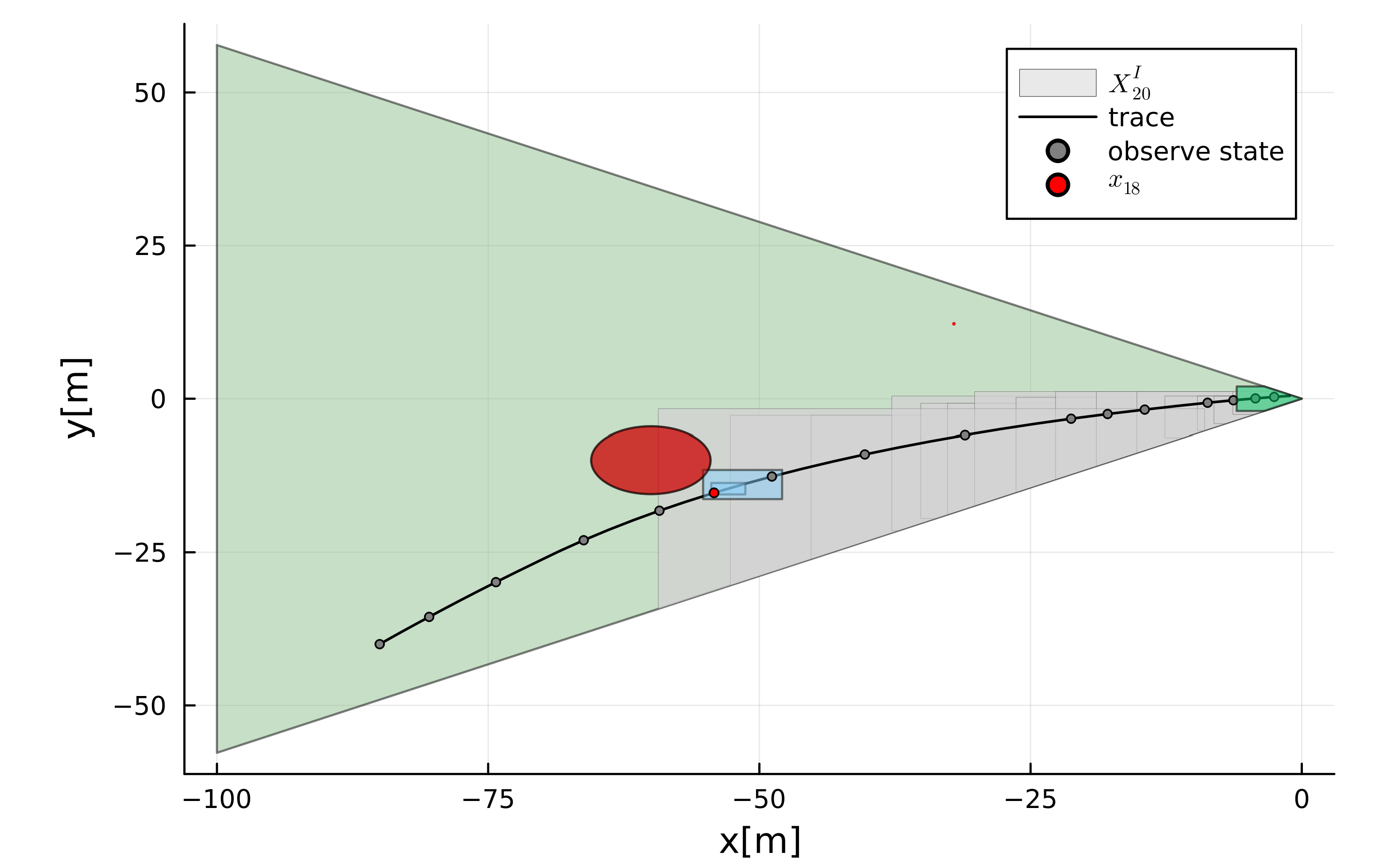}}
	\caption{Online monitoring of spacecraft rendezvous.}
	\label{fig:case2}
\end{figure}

In Fig.~\ref{fig:case2}, we consider a trajectory which completes the task in advance at 40 steps (20min) and the observed states are marked by circles.  Under the  self-triggered algorithm, the monitor only samples system states for 16 times compared with the periodic sampling strategy of 40 times. 
For example,  consider system state $x_{18}$ marked by red circle. 
The blue region shows the evolution of belief state starting from this time instant.
We see that, at time step 20, the predicted belief state intersects with the debris. 
This means that, the spacecraft may hit the \textsf{Debris} region. 
Therefore, the monitor cannot ``sleep'' anymore and has to activate its sensors to solve information ambiguity.  Note that, all feasible sets and reachable sets are four dimensional but are projected into two dimensions (position) for plotting.

% Furthermore, we can also write \emph{Belief Predictions} in the form of
% \begin{equation}
%     \textsf{Pred}_{t}^{t+1}: \mathbb{B}\to \mathbb{B}
% \end{equation}
% For any $\bm{b}\in \mathbb{B}$, we have 
% \begin{equation}
%     \textsf{Pred}_{t}^{t+1}(\mathbf{b'})
%     =\{  (R_{\mathbf{b'}}^I, I) \mid I\in \mathbb{I}_{t+1}\wedge   R_{\mathbf{b'}}^I\neq \emptyset\}, 
% \end{equation}
% where 
% \begin{equation}
%     R_{\mathbf{b'}}^I=
%     \left\{
% 	x  \in \mathcal{X} \,\middle\vert\, 
% 	\begin{array}{cc}
% 		 \exists (R',I')\in \mathbf{b'}, x'\in R' \text{ s.t.}  \\
% 		 (x,I) \in \textsf{succ}_t(x',I')  
% 	\end{array}  
% 	\right\}. 
% \end{equation}

% Furthermore, the \emph{safe beliefs} and \emph{determined beliefs} can also be expressed as follows:
% \begin{enumerate}
%     \item $\textsf{Safe}_{t}(\mathbf{b'}) = 1 \Leftrightarrow [\forall (R_i,I_i) \in \mathbf{b'},R_i \subseteq X_{t}^{I_i}]$
%     \item $\mathbf{b'}$ is \emph{determined} if $\forall R_i,R_j, R_i\neq R_j\ s.t.\  R_i\cap R_j =\emptyset$ 
% \end{enumerate}
% \end{remark}
\section{Conclusion}\label{sec-con}
In this paper, we introduced a novel \emph{self-triggered} approach for online monitoring of dynamical systems based on signal temporal logic specifications. In contrast to conventional online monitoring methods employing periodic information updates, our proposed self-triggered monitor actively determines sampling instants for system states, offering the potential to significantly reduce sensor energy consumption. The key principle behind our trigger time design is to enable the monitor to remain inactive as long as the predicted belief state exhibits no ambiguity regarding the satisfaction of the STL task.   Illustrative examples are provided to show the effectiveness of the proposed approach.  In the future, we would like to  extend the application of the proposed self-trigger mechanism to address model-based control synthesis problems for STL specifications.

\bibliographystyle{IEEEtran}
\bibliography{reference}
\renewcommand{\theequation}{A.\arabic{equation}}
\setcounter{equation}{0}

\appendix
\begin{myprof}
We define the augmented sequence generated from any prefix $\mathbf{x}_{0:t}$ by the index update function: 
\[
\zeta^{upd}(\mathbf{x}_{0:t})= (x_0,I_0)(x_1,I_1)\ldots(x_t,I_t)
\] 
where $I_{i+1}=\textsf{update}_i(I_{i},x_i),\forall  i>0$. 
Then we prove that for any $\mathbf{x}_{0:t}$, there always exists $\zeta^{upd}(\mathbf{x}_{0:t})=\zeta(\mathbf{x}_{0:t})$.

Initially, we have $\zeta^{upd}(\mathbf{x}_{0:0})\!=\!\zeta(\mathbf{x}_{0:0})\!=\!(x_0,\mathcal{I})$.
Any $I$-remaining formula at time  instant $t$ can be written in the following form of

\begin{equation}\label{I_formula}
\Phi_t^I= \underset{i\in I\cap\mathcal{I}^{\mathbf{U}}_{<t}}{\bigwedge}\Phi_i\underset{i\in I\cap\mathcal{I}^{\mathbf{U}}_{t}}{\bigwedge}\Phi_i\underset{i\in \mathcal{I}_{t}^{\mathbf{G}}}{\bigwedge}\Phi_i\underbrace{\underset{i\in \mathcal{I}_{>t}^{\mathbf{U}}}{\bigwedge}\Phi_i\underset{i\in \mathcal{I}_{>t}^{\mathbf{G}}}{\bigwedge}\Phi_i}_{\hat{\Phi}^I_{>t}}
\end{equation}
Obviously, once the STL task is settled, the indices of ``always'' temporal operator are fixed and irrelevant with system evolution. Then it suffices to focus on the changes in indices of ``until'' temporal operator.

For the $I$-remaining index in $\zeta(\mathbf{x}_{0:t})$, we know that the index of ``until'' operator will be contained in $I_t$ if it is not satisfied by prefix $\mathbf{x}_{0:t-1}$. 

For the $I$-remaining formula in $\zeta^{upd}(\mathbf{x}_{0:t})$, we consider the augmented state at time $t$, i.e. $(x_t,I'_t)$, where the prime is used to distinguish from the remaining formula in augmented state in $\zeta(\mathbf{x}_{0:t})$. Consider the index that is not in $I'_t$. At the beginning, $I_0$ contains all the indices, the ``until'' temporal operator will be removed if satisfied, and the ``always'' temporal operator will be removed if expired. Combing the index update function, $I'_t$ can be written as follows
\begin{equation}
    \Phi^{I'_t}_t\!=\!\! \underset{\substack{i\in\mathcal{I}^{\mathbf{U}}_{<t},\nexists k\in[a_i,b_i]\\ s.t. x_k\in H_i^1\cap H_i^2}}{\bigwedge}\!\!\Phi^{[a_i,b_i]}_i\underset{\substack{i\in\mathcal{I}^{\mathbf{U}}_{t},\nexists k\in[a_i,t)\\ s.t. x_k\in H_i^1\cap H_i^2}}{\bigwedge}\!\!\Phi^{[a_i,t]}_i\underset{i\in \mathcal{I}_{t}^{\mathbf{G}}}{\bigwedge}\!\Phi_i\underset{i\in \mathcal{I}_{>t}}{\bigwedge}\!\Phi_i
    \label{updateI}
\end{equation}
It follows that the indices contained in $I'_t$ are exactly the same as $I_t$.
Then we complete the proof that $\zeta^{upd}(\mathbf{x}_{0:t})=\zeta(\mathbf{x}_{0:t})$ always holds. Furthermore, we can say that starting from the same augmented state $(x_k,I)$, $\zeta^{upd}_{I}(\mathbf{x}_{k:t})=\zeta_{I}(\mathbf{x}_{k:t})$ is true where $I$ denotes the remaining formula paired $x_k$.

Next, we complete the proof by induction on observation times $n$ as follows:

\emph{Base case:} For $n\!=\!0$, we have $\hbar\!=\!(x_0,\tau_0)$ and $\bm{\hat{b}}_0\!=\!\bm{b}_0\!=\!\{(x_0,\mathcal{I})\}$. $\mathcal{M}_{\bm{b}}^{-1}(\hbar)$ can only be $\{(x_0,\mathcal{I})\}$ to be consistent with observation history. Then we know that $\bm{b}_0=\textsf{last}(\mathcal{M}_{\bm{b}}^{-1}(\hbar))$.

\emph{Induction Step:} Assume that at times of observation $n$, i.e., $\hbar\!=\!(x_0,\tau_0)\ldots (x'_{n-1},\tau_{n-1})x'_{n}$ and $\bm{b}'_n = \{(x'_n,I_n^1),\ldots,(x'_n,I_n^m) \}$, we have $\bm{b}'_n=\textsf{last}(\mathcal{M}_{\bm{b}}^{-1}(\hbar))$.

At $n+1$ observation times, the history updates to
\[
\hbar'=(x_0,\tau_0)\ldots (x'_n,\tau_{n})x'_{n+1}
\]
Thus, the next predicted belief state is given by
\[\hat{\bm{b}}'_{n+1}=\underset{(x'_n,I_n)\in\bm{b}'_n}{\cup}\textsf{last}(\zeta^{upd}_{I_n}(\mathbf{x}_{T_{n-1}:T_{n}}))\] 
Under the assumption that $\bm{b}'_n=\textsf{last}(\mathcal{M}_{\bm{b}}^{-1}(\hbar))$, we have 
\[
\mathcal{M}_{\bm{b}}^{-1}(\hbar')\!=\! \{\textsf{last}(\zeta_{I_n}(\mathbf{x}_{T_{n-1}:T_{n}})\mid (x'_n,I_n)\!\in\!\bm{b}'_n , x_{T_{n}}\!=\!x'_{n+1} \}
\]
Due to the fact that $\zeta^{upd}_{I}(\mathbf{x}_{k:t})\!=\!\zeta_{I} (\mathbf{x}_{k:t})$ we have proven before, it holds that
$\bm{b}'_{n+1}\!=\!\textsf{refine}(\bm{\hat{b}}'_{n+1},x'_{n+1})\!=\!\mathcal{M}_{\bm{b}}^{-1}(\hbar')$.
The proof is thus completed.
\end{myprof}
\end{document}